\documentclass[11pt]{amsart}

\usepackage{amssymb, amsmath}
\usepackage{amsfonts}
\usepackage{float}
\usepackage{multirow}
\usepackage{graphics}
\usepackage{color}
\usepackage{graphicx}
\usepackage{enumitem}
\usepackage[usenames,dvipsnames]{xcolor}
\usepackage[section]{placeins}
\textheight=21.cm
\textwidth=15.5cm
\setlength{\oddsidemargin}{0.6cm}
\setlength{\evensidemargin}{0.6cm}

\numberwithin{equation}{section}

\renewcommand\Re{\operatorname{Re}}
\renewcommand\Im{\operatorname{Im}}


\newtheorem{assumption}{Assumption}

\newtheorem{theorem}{Theorem}

\begin{document}
\title{Dynamic Transitions of Quasi-Geostrophic Channel Flow}

\author[Dijkstra]{Henk Dijkstra}
\address[HD]{Institute for Marine and Atmospheric research Utrecht
Department of Physics and Astronomy
Utrecht University
Princetonplein 5, 3584 CC Utrecht,
The Netherlands}
\email{H.A.Dijkstra@uu.nl}

\author[Sengul]{Taylan Sengul}
\address[TS]{Department of Mathematics, Yeditepe University, 34750 Istanbul, Turkey}
\email{taylansengul@gmail.com}

\author[Shen]{Jie Shen}
\address[JS]{Department of Mathematics, Purdue University, West Lafayette, IN 47907}
\email{shen7@purdue.edu}

\author[Wang]{Shouhong Wang}
\address[SW]{Department of Mathematics,
Indiana University, Bloomington, IN 47405}
\email{showang@indiana.edu, http://www.indiana.edu/~fluid}

\begin{abstract}
The main aim of this paper is to describe the dynamic transitions in flows described by the two-dimensional, barotropic vorticity equation in a periodic zonal channel. In \cite{CGSW03}, the
existence of a Hopf bifurcation in this model as the Reynolds number crosses a critical value was proven. In this paper, we extend the results in \cite{CGSW03} by addressing the stability problem of the bifurcated periodic solutions. Our main result is the explicit expression of a non-dimensional number $\gamma$ which controls the transition behavior. We prove that depending on $\gamma$,
the  modeled flow  exhibits either a continuous (Type I) or catastrophic (Type II) transition. Numerical evaluation of $\gamma$ for a physically realistic region of parameter space  suggest that a catastrophic transition is preferred in this flow.
\end{abstract}
\keywords{quasi-geostrophic flow, channel flow, spatial-temporal  patterns, dynamic transitions, climate variability}
\maketitle

\section{Introduction}
Climate variability exhibits recurrent large-scale patterns which are
directly  linked to dynamical processes represented in the governing
dissipative dynamical system \cite{DijkstraB2000,DG05, ghil00}. The
study of  the persistence of these patterns and the transitions between
them  also play a crucial role in understanding climate change and in
interpreting future climate projections \cite{IPCC2013}.

Current climate models used for developing such  projections are
based on the conservation laws of fluid mechanics and  consist of
systems of  nonlinear partial differential equations (PDEs).  These
can be put into the perspective of infinite-dimensional dissipative systems
exhibiting  large-dimensional attractors. The global attractor is a mathematical
object strongly connected to  the overall dissipation in the system. Climate
variability is, however,  often associated with  dynamic transitions between
different regimes, each  represented by local attractors.

There are many examples of climate phenomena where such transitions
have been investigated numerically, such as the transition to oscillatory
behavior in models of the El Ni\~no/Southern Oscillation  phenomenon in the
equatorial Pacific, the  transitions between different mean flow patterns of
the Kuroshio Current in the North Pacific and the transitions between
blocked and zonal flows in the midlatitude atmosphere (see, e.g.,
\cite{Dijkstra2013}). However, rigorous  mathematical results on the type
of the transitions in these systems of PDEs are extremely scarce.

This paper arises out of a research program to generate rigorous  mathematical
results  on climate variability developed from the viewpoint of  dynamical
transitions.   We have  shown \cite{ptd} that the transitions of all dissipative
systems can  be  classified into three  classes: continuous, catastrophic
and random, which  correspond to very different dynamical transition behavior
of the system.

We here focus on the dynamic transitions in flows described by one of the
cornerstone dynamical models in both oceanic and atmospheric dynamics,
the two-dimensional, dimensionless barotropic vorticity equation given by
\begin{equation} \label{pre-main}
\frac{\partial \Delta \psi}{\partial t} + \epsilon J(\psi,\Delta \psi) + \frac{\partial \psi}{\partial x}  =
E \Delta^2 \psi + \alpha_\tau \sin \pi y,
\end{equation}
where $\Delta$ is the Laplacian operator, $J(F, G) = (\partial F/\partial x) (\partial G / \partial y) - (\partial F / \partial y) (\partial G / \partial x) $ is the advection operator and $\psi$ the
geostrophic stream function. The equation (\ref{pre-main}) describes flows
with a typical length scale  $L$ on a mid-latitude beta-plane with a central latitude
$\theta_0$ and  a planetary vorticity gradient $\beta_0$.  It can be derived from the primitive
equations by  the so-called quasi-geostrophic (QG) approximation, which assumes a dominant   balance  between the Coriolis force and the pressure gradient force  \cite{pedlosky87}.

To derive (\ref{pre-main}), time, length and stream function were  non-dimensionalised
with $1/(\beta_0 L)$, $L$ and $U L$, where $U$ is a characteristic  horizontal velocity.
The positive parameters $\epsilon$ and $E$ are Rossby and Ekman numbers,
respectively, given by
\[
\epsilon = \frac{U}{\beta_0 L^2} ~ ; ~ E = \frac{A_H}{\beta_0 L^3}
\]
where $A_H$ is the lateral friction coefficient.  The Reynolds number $R$ is defined
as
\begin{equation} \label{RepsilonE}
R = \frac{\epsilon}{E} = \frac{U L}{A_H}
\end{equation}

The forcing term $\alpha_\tau  \sin \pi y$ in \eqref{pre-main} may represent the transfer of
angular momentum into midlatitudes due to tropical Hadley cell  in an atmospheric model.
In this case, the magnitude of the velocity can be scaled such that $\alpha_\tau = 1$.
In the ocean case, the forcing term represents the dimensionless wind stress
\[
  \tau = \frac{\alpha_\tau}{\pi}(\cos\pi y,0)
\]
Such a wind stress mimics the annually averaged zonal wind distribution
over the North Atlantic and North Pacific with westerly (i.e. eastward) winds over the midlatitudes
and easterlies in the tropics and polar latitudes. When the dimensionless wind stress has a
magnitude $\tau_0$ and the ocean basin a  depth $D$ and the water a density $\rho$, the
factor $\alpha_\tau$ is given by
\[
\alpha_\tau = \frac{\tau_0}{\rho D L \beta_0 U}
\]
In this case, we can choose $U$ such that $\alpha_\tau = 1$, which is usually referred
to as the Sverdrup scaling.  In both ocean and atmosphere cases the equation (\ref{pre-main})
has two free parameters  (out of the three $\epsilon$, $E$ and $R$) which we choose here
as $R$ and $E$.

We consider flows in a so-called zonal channel of length $2/a$ with walls bounding
the domain  at $y = \pm 1$ and periodic conditions in zonal direction. The equation
\eqref{pre-main} is therefore supplemented with boundary conditions
\begin{equation} \label{BC}
\begin{aligned}
& \psi \mid_{x = 0} = \psi \mid_{x = 2/a}, \\
& \psi \mid_{y = \pm 1} =  \frac{\partial^2 \psi}{\partial y^2} \mid_{y = \pm 1} = 0.
\end{aligned}
\end{equation}

The equations \eqref{pre-main} with \eqref{BC} admit the following steady state
\begin{equation} \label{steady state}
\psi_0 = - \frac{1}{\pi^4 E} \sin\pi y,
\end{equation}
which represents a zonal jet with zonal velocity field $u_0  =   - \partial \psi_0 / \partial y =
1/(\pi^3 E) \cos \pi y$.

It is shown in \cite{CGSW03} that for any $a\geq \sqrt{3}/2$, $\psi_0$ is both
linearly and nonlinearly stable. Also, there is an $\alpha_0$ with $\sqrt{3}/4 < \alpha_0 < \sqrt{3}/2$ such that for any $\sqrt{3}/4 \leq a \leq  \alpha_0$, there is a critical Reynolds number $R_0>0$ depending on $a$ such that a simple pair of complex conjugate eigenvalues cross the imaginary axis as the Reynolds number $R$
crosses $R_0$, leading to the existence of Hopf bifurcation at the critical Reynolds number.
However,  the stability of the bifurcated periodic solutions and the dynamic transition behavior
near $R_0$  are  so far unknown. The main difficulty  is caused by the lack of explicit
analytical form of the eigenfunctions.

The main objective of this article is to investigate the dynamic transition and the stability of
 the basic state \eqref{steady state} as the Reynolds number crosses a critical threshold
 $R_0$. The main result we obtain is that the dynamic transition from this state to new states
 is either continuous (Type-I) or catastrophic (Type-II), and is determined  by  the sign of a
 computable  parameter $\gamma$ given by (\ref{gamma}). Our numerical investigations
 indicate that in a physically relevant parameter regime, only  catastrophic transitions occur.

\section{Main Theorem}

Throughout $\Re z$, $\Im z$, $\overline{z}$ will denote the real part, imaginary part and conjugate of a complex number $z$.  $D= \frac{d}{dy}$ is the derivative operator, $\Omega = (0, 2/a) \times (-1, 1) \subset \mathbb{R}^2$ and $(\cdot, \cdot)$ is the $L^2(\Omega)$ inner product.

\subsection{Functional setting}
Considering the deviation $\psi^{\prime} = \psi - \psi_0$ from the basic steady state
\eqref{steady state} and omitting the primes, we obtain from \eqref{pre-main},
\begin{equation} \label{main}
\frac{\partial \Delta \psi}{\partial t} + \epsilon J(\psi,\Delta \psi) = -\frac{R}{\pi^3} \cos \pi y (\Delta \frac{\partial \psi}{\partial x} +\pi^2 \frac{\partial \psi}{\partial x}) -\frac{\partial \psi}{\partial x} + E \Delta^2 \psi.
\end{equation}

Note that \eqref{pre-main} can also be formulated in velocity ${\bf u} = (- \frac{\partial \psi}{\partial y}, \frac{\partial \psi}{\partial x})$ and pressure $p$ as
\begin{equation} \label{primitive-formulation}
\begin{aligned}
& \frac{\partial {\bf u}}{\partial t} + \epsilon ({\bf u} \cdot \nabla) {\bf u} + f {\bf k} \times {\bf u} = E \Delta {\bf u} - \nabla p + {\bf \tau}, \\
& \nabla \cdot {\bf u} = 0.
\end{aligned}
\end{equation}
Here $f$ is the dimensionless Coriolis parameter on a midlatitude beta plane
that gives rise to the $\partial \psi / \partial x$ term in \eqref{pre-main}; ${\bf k}$
is the unit vector in the $z$-direction.

Using the formulation \eqref{primitive-formulation}, we can write the problem in the following abstract form
\begin{equation} \label{main functional equation}
\frac{d {\bf u}}{dt} = L {\bf u} + G({\bf u}),
\end{equation}
where $L:H_1 \rightarrow H$ is the linear operator, $G:H_1 \rightarrow H$ is the nonlinear operator, and
\begin{equation*}
\begin{aligned}
& H =  \{ {\bf u} = (u,v) \in (L_2(\Omega))^2 \mid & v \mid_{y = \pm 1} = 0, \, \text{div} {\bf u} = 0, {\bf u} \mid_{x = 0} = {\bf u} \mid_{x = 2/a} \} \\
& H_1 = H \cap (H^1(\Omega))^2.
\end{aligned}
\end{equation*}

The eigenvalue problem for the linearized equation of \eqref{main} reads
\begin{equation} \label{lin_eig_prob}
E \Delta^2 \psi + \frac{R}{\pi^3} \cos(\pi y) (\Delta \psi_x +\pi^2 \psi_x) - \psi_x = \beta \Delta \psi
\end{equation}
with boundary conditions \eqref{BC}.

Since the solution $\psi$ is periodic in $x$ with period $2/a$, we can expand $\psi$ in Fourier series, so for the $m$-mode of the expansion, we can write
\begin{equation} \label{ansatz}
\psi = e^{i \alpha_m x} Y(y), \quad \alpha_m = a m \pi.\end{equation}
where $Y(y)$ satisfies the boundary condition $Y(\pm 1)=D^2Y(\pm 1)=0$.
Plugging \eqref{ansatz} into \eqref{lin_eig_prob}, we obtain a sequence of one-dimensional problem:
\begin{equation}\label{1d linear equation for Y}
  E(D^2 - \alpha_m^2)^2 Y + i\alpha_m \left( \frac{R}{\pi^3} \cos(\pi y) (D^2 -\alpha_m^2 + \pi^2) - 1 \right)Y
  =\beta(D^2 -\alpha_m^2)Y.
\end{equation}

The eigenvectors of \eqref{lin_eig_prob} are $\psi_{m, j} = e^{i \alpha_m x} Y_{m, j}(y)$ where $Y_{m, j}$ are the eigenvectors of \eqref{1d linear equation for Y} corresponding to the eigenvalues $\beta_{m, j} \in \mathbb{C}$ where $m \in \mathbb{Z}$, $j = 1, 2, 3, \dots$. Moreover $\beta_{m, j}$ are ordered so that $\Re \beta_{m, j} \le \Re \beta_{m, k}$ if $j > k$. Also $\beta_{-m, j} = \overline{\beta_{m, j}}$ and we can take $\psi_{-m, j} = \overline{\psi_{m, j}}$. In particular $\psi_{0, j}$ and $\beta_{0, j}$ are real.

We also need to consider the eigenvalue problem for the adjoint linear operator which can be written as
\begin{equation*}
  E \Delta^2 \psi^{\ast} - \frac{R}{\pi^3} \Delta ( \cos(\pi y) \psi^{\ast}_x ) - \frac{R}{\pi^3} \pi^2 \cos(\pi y) \psi^{\ast}_x + \psi^{\ast}_x = \beta \Delta \psi^{\ast}.
\end{equation*}

Using $\psi^{\ast} = e^{i \alpha_m x} Y^{\ast}(y)$, we obtain the analog of \eqref{1d linear equation for Y}
\begin{equation}\label{1d adjoint linear equation for Y}
\begin{split}
 E(D^2 - \alpha_m^2)^2 Y^{\ast}& - i\alpha_m \left( \frac{R}{\pi^3} D^2 (Y^{\ast} \cos(\pi y)) + \frac{R}{\pi^3} \cos(\pi y) (\pi^2-\alpha_m^2) Y^{\ast} - Y^{\ast} \right)\\
 & = \beta^{\ast} (D^2 -\alpha_m^2)Y^{\ast}.
 \end{split}
\end{equation}
By basic properties of the adjoint linear eigenvalue problem, we have $\beta_{m,j}^{\ast} = \overline{\beta_{m, j}}$ and
\begin{equation} \label{orthogonality relation}
  ({\bf u}_{m, j}, {\bf u}^{\ast}_{n, k}) = 0, \quad \text{if } (m,j) \neq (n, k),
\end{equation}
where ${\bf u}_{m, j} = (\frac{\partial }{\partial y}, -\frac{\partial}{\partial x})\psi_{m, j}$.

\subsection{The main theorem and its proof}
Our main aim is to identify the transitions of \eqref{main} in the case where two complex
conjugate eigenvalues cross the imaginary axis. Thus, we assume the following condition
on the spectrum of the linearized operator.

\begin{assumption} \label{PES assumption}
  Depending on $a$ and $E$, there exists a critical Reynolds number $R_0$ and a zonal wave integer $m_0 \geq 1$ such that
  \begin{equation*}
  \begin{aligned}
  & \Re(\beta_{m_0, 1}(R)) = \Re(\beta_{-m_0, 1}(R)) =
  \begin{cases}
  < 0 & \text{if } R < R_0,\\
  = 0 & \text{if } R = R_0,\\
  > 0 & \text{if } R > R_0.\\
  \end{cases} \\
  & \Re (\beta_{m, j}(R_0)) < 0, \quad \text{if } (m, j) \neq (\pm m_0, 1).
  \end{aligned}
  \end{equation*}
\end{assumption}

In \cite{CGSW03}, the validity of the Assumption~\ref{PES assumption} is shown with $m_0 = 1$ when $\sqrt{3}/4 \leq a \leq \alpha_0$ for some $\sqrt{3}/4 < \alpha_0 < \sqrt{3} / 2$ .

Let us define
\begin{equation} \label{3integrals}
\begin{aligned}
& I_1 = \int \limits_{-1}^{1} \overline{Y_{m_0,1}^{\ast}}(y) \left((a m_0 \pi)^{2} - D^2 \right)Y_{m_0,1}(y) dy, \\
& I_{2}(k) =  \int \limits_{-1}^{1} \cos(k \pi y) Y_{m_0,1}^{\ast}(y) \left( (a m_0 \pi)^2 - k^2 \pi^2 - D^2 \right) \overline{Y_{m_0,1}}(y) dy, \\
& I_{3}(k) = \int \limits_{-1}^{1} \sin(k \pi y) Y_{m_0,1}(y) D\overline{Y_{m_0,1}}(y) dy.
\end{aligned}
\end{equation}
where $Y_{m_0,1}$ and $Y_{m_0,1}^{\ast}$ are solutions of \eqref{1d linear equation for Y} and \eqref{1d adjoint linear equation for Y} respectively for $m = m_0$, at $R = R_0$.

We define the transition number
\begin{equation} \label{gamma}
  \gamma = -\frac{E(R_0 a m_{0})^{2} \pi}{2 |I_{1}|^{2}}\sum_{k=1}^{\infty}\frac{\Im I_{3}(k)\Im\{I_{1}I_{2}(k)\}}{k}.
\end{equation}

As the next theorem shows, the sign of $\gamma$ determines the type of transition of the system at the critical Reynolds number $R_0$.

\begin{theorem} \label{main theorem}
Let $\gamma$  be defined by \eqref{gamma} and let
\begin{equation} \label{u_per}
  {\bf u}_{\text{bif}}(t, x, y) = \sqrt{\frac{ -\Re \beta_{m_0, 1}}{\gamma}} \Re \left( e^{i t \Im\beta_{m_0, 1}} {\bf u}_{m_0, 1}(x, y) \right) + o(\sqrt{\vert \Re \beta_{m_0, 1} \vert}).
\end{equation}
Under the Assumption~\ref{PES assumption}, the following assertions hold true:

\begin{enumerate}
\item If $\gamma < 0$ then the problem undergoes a Type-I (continuous) transition at $R = R_{0}$ and bifurcates to the time periodic solution ${\bf u}_{\text{bif}}$ on $R>R_{0}$ which is an attractor.

\item If $\gamma>0$ then the problem undergoes a Type-II (catastrophic) transition at $R = R_0$ and bifurcates to the time periodic solution ${\bf u}_{\text{bif}}$ on $R < R_{0}$ which is a repeller.
\end{enumerate}
\end{theorem}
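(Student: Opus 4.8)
The plan is to apply the dynamic transition theory of \cite{ptd}, which reduces the analysis of \eqref{main functional equation} near $R_0$ to a flow on a two-dimensional center manifold. Under Assumption~\ref{PES assumption} the critical eigenspace at $R=R_0$ is the real span of $\mathbf{u}_{m_0,1}$ and $\overline{\mathbf{u}_{m_0,1}}=\mathbf{u}_{-m_0,1}$, with purely imaginary eigenvalues $\pm i\rho_0$, $\rho_0=\Im\beta_{m_0,1}$. First I would write a solution as $\mathbf{u}=z\,\mathbf{u}_{m_0,1}+\bar z\,\overline{\mathbf{u}_{m_0,1}}+\Phi$, with $z\in\mathbb{C}$ the critical amplitude and $\Phi=O(|z|^2)$ in the stable subspace, and derive the reduced equation for $z$ by pairing \eqref{main functional equation} with the adjoint mode $\mathbf{u}^{\ast}_{m_0,1}$ and using the biorthogonality \eqref{orthogonality relation}. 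Because the nonlinearity is bilinear --- write $G(\mathbf{u})=G_2(\mathbf{u},\mathbf{u})$ with $G_2$ the symmetric form coming from $\epsilon J(\psi,\Delta\psi)$ --- this reduced equation is, to leading order, the Poincar\'e--Andronov--Hopf normal form $\dot z=\beta_{m_0,1}z+c\,z|z|^2+o(|z|^3)$. The classification of \cite{ptd} then yields the whole theorem once the real part of the cubic coefficient is known: $\Re c<0$ gives a supercritical attracting cycle on $R>R_0$ (Type-I) and $\Re c>0$ a subcritical repelling cycle on $R<R_0$ (Type-II), and $\gamma$ is a positive multiple of $\Re c$, as confirmed by the amplitude $\sqrt{-\Re\beta_{m_0,1}/\gamma}$ and the profile $\Re(e^{it\rho_0}\mathbf{u}_{m_0,1})$ recorded in \eqref{u_per}. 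So everything hinges on evaluating the sign of $\Re c$.

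Since $G_2$ is quadratic there is no direct cubic term, so $c$ comes entirely from the interaction of $G_2$ with the quadratic part of the center-manifold function. I would expand $\Phi=\Phi_{20}z^2+\Phi_{11}z\bar z+\Phi_{02}\bar z^2$ and solve, to leading order,
\[
(2i\rho_0 I-L)\Phi_{20}=(I-P)G_2(\mathbf{u}_{m_0,1},\mathbf{u}_{m_0,1}),\qquad -L\,\Phi_{11}=2(I-P)G_2(\mathbf{u}_{m_0,1},\overline{\mathbf{u}_{m_0,1}}),
\]
where $P$ is the projection onto the critical space. The self-interaction forcing $\Phi_{20}$ carries zonal wavenumber $2m_0$, while the cross-interaction forcing $\Phi_{11}$ carries wavenumber $0$, i.e. it is the mean-flow correction. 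The resonant cubic coefficient is then the surviving wavenumber-$m_0$ feedback
\[
c=\frac{2\big(G_2(\overline{\mathbf{u}_{m_0,1}},\Phi_{20})+G_2(\mathbf{u}_{m_0,1},\Phi_{11}),\,\mathbf{u}^{\ast}_{m_0,1}\big)}{(\mathbf{u}_{m_0,1},\mathbf{u}^{\ast}_{m_0,1})},
\]
and after passing to the $Y$-variables the denominator is exactly $I_1$, which explains the $|I_1|^2$ in \eqref{gamma}.

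The crucial simplification is that the mean-flow sector is explicitly solvable. In the $m=0$ sector all the $\cos(\pi y)\,\partial_x$ and $\partial_x$ terms of \eqref{1d linear equation for Y} drop out (since $\partial_x=0$ there), so $L$ reduces to a constant-coefficient operator on the mean vorticity. Zonal averaging of $J(\psi,\Delta\psi)$ gives a forcing that is itself a second $y$-derivative, so the two Laplacians cancel and one finds the mean vorticity explicitly, proportional to $\Im\big(\overline{Y_{m_0,1}}\,DY_{m_0,1}\big)$, which vanishes at $y=\pm1$ by the boundary conditions \eqref{BC}. Expanding it in the compatible basis $\{\sin(k\pi y)\}_{k\ge1}$ produces Fourier coefficients proportional to $\Im I_3(k)$; recovering the mean zonal velocity from the mean vorticity requires one integration, and the relation $D\cos(k\pi y)=-k\pi\sin(k\pi y)$ injects the weight $1/k$ and a $\cos(k\pi y)$ series. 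Feeding this mean flow back through $G_2$ against the adjoint mode reduces the feedback integral, mode by mode, to $I_2(k)$. Collecting the real part, using $\Re(\text{num}/I_1)=\Re(\text{num}\cdot\overline{I_1})/|I_1|^2$ together with the $i\alpha_{m_0}$ factors from advection, turns the product into $\Im\{I_1 I_2(k)\}$ and reproduces the series \eqref{gamma} with its prefactor $E(R_0 a m_0)^2\pi/2$ (here $\epsilon=R_0 E$ supplies the powers of $R_0$ and $E$).

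The main difficulty, as the authors note, is the absence of closed-form eigenfunctions $Y_{m_0,1},Y^{\ast}_{m_0,1}$; what rescues the mean-flow computation is precisely that the $m=0$ operator is constant-coefficient, so that sector of the center manifold is found explicitly and $\Re c$ depends on the unknown eigenfunctions only through the computable integrals $I_1,I_2(k),I_3(k)$. The step I expect to be the genuine crux is the second-harmonic correction $\Phi_{20}$: it solves a variable-coefficient problem in the $2m_0$ sector with no explicit Green's function, so its resonant feedback $\big(G_2(\overline{\mathbf{u}_{m_0,1}},\Phi_{20}),\mathbf{u}^{\ast}_{m_0,1}\big)$ cannot be written through $I_1,I_2(k),I_3(k)$; since \eqref{gamma} contains only these, the formula can be exact only if this feedback does not enter the real part $\Re c$ (contributing at most to the frequency $\Im c$). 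Proving that must rely on the fine algebraic structure of the quadratic interaction and the energy/enstrophy conservation of $J(\psi,\Delta\psi)$ rather than on any reflection symmetry, since the beta term $\partial_x\psi$ breaks the natural point symmetry $\psi(x,y)\mapsto-\psi(-x,-y)$; this is the hardest part of the argument. Once the sign of $\Re c$, equivalently of $\gamma$, is in hand, the classification of \cite{ptd} gives the supercritical attracting cycle on $R>R_0$ when $\gamma<0$ and the subcritical repelling cycle on $R<R_0$ when $\gamma>0$, with the orbit, amplitude and phase as stated in \eqref{u_per}; checking convergence of the $k$-series closes the proof.
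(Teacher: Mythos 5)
Your overall strategy is the same as the paper's: reduce \eqref{main functional equation} to the two-dimensional center manifold, bring the reduced system to the normal form $\dot z=\beta z+cz|z|^2+o(|z|^3)$, classify the transition by the sign of the real part of the cubic coefficient using \cite{ptd}, and obtain \eqref{gamma} by solving the mean-flow ($m=0$) sector explicitly (the paper does exactly this in its Steps 2--6, with \eqref{beta0k-Y0k} playing the role of your constant-coefficient mean-flow solve). The genuine gap is the one you flag yourself and then leave open: you never prove that the second-harmonic feedback $\bigl(G_2(\overline{{\bf u}_{m_0,1}},\Phi_{20}),{\bf u}^{\ast}_{m_0,1}\bigr)$ is absent from $\Re c$. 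Asserting that the formula ``can be exact only if'' this holds is not an argument, and your guess about the mechanism points the wrong way: the paper's justification of this step (its Steps 1, 2, 4, 5) uses no energy/enstrophy identity for $J(\psi,\Delta\psi)$ and no reflection symmetry, but only harmonic bookkeeping in $x$. Writing all fields as $e^{i\alpha_m x}(\cdot)$ and keeping only products of total wavenumber zero, the paper derives $g_{2m,k}^{11}=ig_{2m,k}^{12}$ \eqref{NA6.1}, the center-manifold relations $\Phi_{1,2m,k}=\tfrac{i}{2}\Phi_{2,2m,k}=-\Phi_{3,2m,k}$, and $c_{2m,k}^{11}=-ic_{2m,k}^{12}$ \eqref{c2mk11}; inserting these into \eqref{b130 b230} makes the $2m_0$-sector terms cancel identically in $b=b_{30}^{1}+ib_{30}^{2}$ of \eqref{b}, leaving \eqref{gamma pre}. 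Until you supply these identities (or an equivalent argument), your proposal does not establish \eqref{gamma}, hence not the theorem.

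That said, your instinct that this cancellation is the crux, and that it is delicate, is well founded, and you should not import the paper's version of it uncritically. The identities \eqref{NA6.1} and \eqref{c2mk11} are correct, but the assembly \eqref{comp2}--\eqref{b} sums only over the modes ${\bf u}_{2m,k}$ with wavenumber $+2m_0$. Since the center manifold function and the coefficients of the real reduced system \eqref{real-reduced-2} must be real, the conjugate modes $\overline{{\bf u}_{2m,k}}$ (wavenumber $-2m_0$), carrying the conjugate coefficients $\overline{\Phi_{1,2m,k}}$ and $\overline{c^{\,ij}_{2m,k}}$, have to be included alongside them. Doing so replaces the paper's two contributions $\Phi_{1,2m,k}c^{11}_{2m,k}$ to $b_{30}^{1}$ and $i\Phi_{1,2m,k}c^{11}_{2m,k}$ to $b_{30}^{2}$, which cancel in $b_{30}^{1}+ib_{30}^{2}$, by $2\Re\{\Phi_{1,2m,k}c^{11}_{2m,k}\}$ and $-2\Im\{\Phi_{1,2m,k}c^{11}_{2m,k}\}$, which instead combine to $2\overline{\Phi_{1,2m,k}c^{11}_{2m,k}}$ --- generically nonzero. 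This is consistent with the standard Hopf formula, in which the term $\bigl(G_2(\bar q,(2i\rho_0-L)^{-1}G_2(q,q)),p\bigr)$ has nonvanishing real part for a generic quadratic nonlinearity (and with classical weakly nonlinear shear-flow theory, where the Landau constant always contains a second-harmonic part); one can confirm on a scalar toy problem $\dot u=Lu+u^{2}$ with critical modes $e^{\pm ix}$ that the combination used in \eqref{b} recovers only the mean-flow portion of the true cubic coefficient. So the step you identified is missing from your proposal, and the paper's own treatment of it rests on a cancellation that does not survive the inclusion of the conjugate modes: a complete proof must either exhibit a problem-specific reason why the $2m_0$-sector feedback is purely imaginary after normalization, or add its explicit contribution to $\gamma$.
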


\begin{proof}
Let ${\bf u}_{m_0 , 1} = (\frac{\partial}{\partial y},-\frac{\partial}{\partial x})\psi_{m_0, 1}$ where $\psi_{m_0, 1} = e^{ia m_0 \pi x}Y_{m_0,1} (y)$, denote the first critical eigenfunction corresponding to the eigenvalue $\beta_{m_0, 1}$ in Assumption~\eqref{PES assumption}.
For simplicity of notation, we will denote
\begin{equation} \label{critical mode}
  \begin{aligned}
    & {\bf u}^1 = \Re {\bf u}_{m_0, 1}, \quad {\bf u}^2 = \Im {\bf u}_{m_0, 1}, \\
    & \psi^1 = \Re \psi_{m_0,1}, \quad
    \psi^2 = \Im \psi_{m_0,1}, \\
    & Y^1 = \Re Y_{m_0,1}, \quad Y^2 = \Im Y_{m_0,1},
  \end{aligned}
\end{equation}
where $Y_{m_0, 1}$ solves equation \eqref{1d linear equation for Y}.

In the proof we will use the following trilinear operators
\begin{equation} \label{G}
  G({\bf u}_I, {\bf u}_J, {\bf u}_K) =
  - \epsilon \int_{\Omega} ({\bf u}_I \cdot \nabla){\bf u}_J\cdot  \overline{{\bf u}}_Kdx dy.
\end{equation}
and
\begin{equation*}
G_s({\bf u}_I, {\bf u}_J, {\bf u}_K) = G({\bf u}_I, {\bf u}_J, {\bf u}_K) + G({\bf u}_J, {\bf u}_I, {\bf u}_K).
\end{equation*}

\noindent
\textit{Step 1. Computation of nonlinear interactions.}  It is easy to see that only $(0, k)$ and $(2m, k)$ adjoint modes interact nonlinearly with the critical $(m_0, 1)$-mode i.e.
\begin{equation} \label{reduce0}
G({\bf u}^i,{\bf u}^j,{\bf u}^{\ast}_{n,k}) = 0 \quad \text{if } n \neq 0, \text{ or } n \neq 2m, \, i,j = 1,2.
\end{equation}
We will first investigate these interactions.
We have ${\bf u}_{0k} = {\bf u}^{\ast}_{0k} = (\frac{\partial}{\partial y}\psi_{0k}, 0) = (DY_{0k}, 0)$ where $Y_{0k}$ is given by \eqref{beta0k-Y0k}. In particular ${\bf u}^{\ast}_{0k}$ is real.
Using
\begin{equation*}
G({\bf u},{\bf u}, {\bf u}_{0k}^{\ast}) = C \int_{x = 0}^{2/a} e^{2i a m_0 \pi x}dx = 0 ,
\end{equation*}
where $C = \int_{y = -1}^1 \text{(only y-dependent terms)} dy$, we get
\begin{equation} \label{NA2}
0  = G({\bf u}^1,{\bf u}^1,{\bf u}^{\ast}_{0k}) - G({\bf u}^2,{\bf u}^2,{\bf u}^{\ast}_{0k})
 + i(G({\bf u}^1,{\bf u}^2,{\bf u}^{\ast}_{0k}) + G({\bf u}^2,{\bf u}^1,{\bf u}^{\ast}_{0k}))
\end{equation}
Also noting,
\begin{equation*}
G(\overline{{\bf u}},{\bf u}, {\bf u}_{2m,k}^{\ast l})  = C \int_{x = 0}^{2/a} e^{-i\alpha_{m_0} x} e^{i\alpha_{m_0} x} T(\alpha_{2m} x) dx = 0,
\end{equation*}
where $C = \int_{y = -1}^1 \text{(only y-dependent terms)} dy$ and $T = \sin$ or $T = \cos$, we get
\begin{equation} \label{NA4}
0 = G({\bf u}^1,{\bf u}^1, {\bf u}_{2m,k}^{\ast l}) + G({\bf u}^2,{\bf u}^2, {\bf u}_{2m,k}^{\ast l})
 + i(G({\bf u}^1,{\bf u}^2, {\bf u}_{2m,k}^{\ast l}) - G({\bf u}^2,{\bf u}^1, {\bf u}_{2m,k}^{\ast l}))
\end{equation}

Let us define
\begin{equation} \label{NA5}
\begin{aligned}
& g_{0k}^{ij} = G({\bf u}^i,{\bf u}^j,  {\bf u}^{\ast}_{0k}), \\
& g_{2m,k}^{ij} = G({\bf u}^i,{\bf u}^j,{\bf u}^{\ast}_{2m,k}).\\
\end{aligned}
\end{equation}

\eqref{NA2},\eqref{NA4} and \eqref{NA5} imply that
\begin{equation} \label{NA6}
\begin{aligned}
& g_{0k}^{11}  = g_{0k}^{22} , && g_{0k}^{12} = -g_{0k}^{21}, \\
& g_{2m,k}^{11} = - g_{2m,k}^{22},  && g_{2m,k}^{12} = g_{2m,k}^{21}. \\
\end{aligned}
\end{equation}
A lengthy but straightforward calculation which can also be verified by a symbolic computation software shows that
\[
\begin{aligned}
& G({\bf u}^1,{\bf u}^1, {\bf u}_{2m,k}^{\ast 1}) = G({\bf u}^1,{\bf u}^2, {\bf u}_{2m,k}^{\ast 2}), \\
& G({\bf u}^1,{\bf u}^1, {\bf u}_{2m,k}^{\ast 2}) = -G({\bf u}^1,{\bf u}^2, {\bf u}_{2m,k}^{\ast 1}),
\end{aligned}
\]
which implies that
\begin{equation} \label{NA6.1}
g_{2m,k}^{11} = i g_{2m,k}^{12}.
\end{equation}

\noindent
\textit{Step 2. Approximation of the center manifold.}
Next we obtain an approximation for the center manifold function $\Phi$.
Let $H = E_1 \oplus E_2$, $E_1 = \text{span}\{{\bf u}^1, {\bf u}^2\}$, $E_2 = E_1^{\perp}$. By \cite{ptd}, near $R=R_0$, the center manifold $\Phi$ can be approximated by the formula
\begin{equation} \label{complex CM formula}
  \begin{split}
  ( ( -\mathcal{L})^2 + 4\Im(\beta) ^2)& ( -\mathcal{L}) \Phi  =
  ( ( -\mathcal{L}) ^2 + 2\Im(\beta) ^2) P_2G( x_1 {\bf u}^1 + x_2 {\bf u}^2) \\
  & + 2\Im(\beta) ^{2}P_{2}G(x_1{\bf u}^2-x_2{\bf u}^1)\\
  & + \Im(\beta) (-\mathcal{L}) G( x_1{\bf u}^1+x_2{\bf u} ^2,x_2{\bf u} ^1-x_1{\bf u}^2)  \\
  & + \Im(\beta) (-\mathcal{L}) G( x_2{\bf u} ^1-x_1{\bf u} ^2,x_1{\bf u} ^1+x_2{\bf u}^2) +o( 2) ,
  \end{split}
\end{equation}
where
\begin{equation*}
o( 2) = o(x_1^2 + x_2^2) +O(|\Re  \beta(R)| (x_1^2 + x_2^2)).
\end{equation*}

Here $\mathcal{L} = L\mid _{E_2}$ is the projection of the linear operator $L$ onto $E_2$.

Let us write
\[
\Phi = \sum_J \Phi_J {\bf u}_J.
\]
Note that for an eigenvector ${\bf u}_K^{\ast}$ of $L^{\ast}$ corresponding to $\beta_K^{\ast} = \overline{\beta}_K$, by orthogonality relation \eqref{orthogonality relation}, we have
\begin{equation*}
\begin{split}
( ( -\mathcal{L}) ^2+4\Im(\beta) ^2) ( -\mathcal{L}) \Phi, {\bf u}_K^{\ast}) & = \sum_K -\Phi_J\beta_K(\beta_K^2 + 4 \Im(\beta)^2) ({\bf u}_J, {\bf u}_K^{\ast}) \\
& = m_K \Phi_K,
\end{split}
\end{equation*}
where
\[
m_K = -\beta_K(\beta_K^2 + 4 \Im(\beta)^2) ({\bf u}_K, {\bf u}_K^{\ast}).
\]
By \eqref{reduce0} and the center manifold formula \eqref{complex CM formula}, we have the following approximation for the center manifold
\begin{equation} \label{complex-center-manifold}
\Phi = \sum_k \Phi _{0k}{\bf u} _{0k} + \sum_k \Phi _{2m,k}{\bf u}_{2m,k} + o(2) ,
\end{equation}
Here
\begin{equation*}
 \Phi _K = \Phi_{1,K}x_1^2+\Phi_{2,K}x_1 x_2+\Phi_{3,K}x_2^2.
\end{equation*}

Using these results, the coefficients of the center manifold function can be computed as follows:
\begin{equation} \label{NA7}
\begin{aligned}
& \Phi_{1,K} = m_K^{-1}  \left( (\beta_K^2 + 2\Im(\beta) ^2) g_K^{11} + 2\Im(\beta)^2 g_K^{22} + \Im(\beta) \beta_K (g_K^{12}+ g_K^{21}) \right), \\
& \Phi_{2,K} = m_K^{-1} \left( \beta_K^2 (g_K^{12}+g_K^{21}) + 2 \Im(\beta) \beta_K (g_K^{22}-g_K^{11}) ) \right), \\
& \Phi_{3,K} = m_K^{-1} \left( (\beta_K^2 + 2\Im(\beta) ^2) g_K^{22} + 2 \Im(\beta)^2 g_K^{11} - \Im(\beta) \beta_K (g_K^{12} + g_K^{21}) \right).
\end{aligned}
\end{equation}
Using \eqref{NA6}, \eqref{NA6.1} and \eqref{NA7}, we find that
\begin{equation*}
\begin{aligned}
& \Phi_{1,0k} = \Phi_{3,0k}, \quad  \Phi_{2,0k} =  0, \\
& \Phi_{1,2m,k} = \frac{i}{2} \Phi_{2,2m,k} = - \Phi_{3,2m,k}
\end{aligned}
\end{equation*}
Hence the center manifold function \eqref{complex-center-manifold} becomes
\begin{equation} \label{complex-center-manifold2}
\Phi  = \sum_k \Phi_{1,0k} {\bf u} _{0k} (x_1^2 + x_2^2) + \Phi_{1,2m,k} ( x_1^2 -2 i  x_1 x_2 - x_2^2) {\bf u}_{2m,k} +o(2).
\end{equation}
Note that $\Phi_{1,0k}$ is real while $\Phi_{1,2m,k}$ is complex.

\medskip

\noindent
\textit{Step 3. Construction of adjoint modes.}  Now we construct the adjoint modes ${\bf U}^{\ast 1}$ and ${\bf U}^{\ast 2}$ orthogonal to ${\bf u}^1$ and ${\bf u}^2$. Let us denote the real and imaginary parts of the critical adjoint eigenvector by ${\bf u}^{\ast 1} = \Re {\bf u}^{\ast}_{m_0,1}$ and ${\bf u}^{\ast 2} = \Im {\bf u}^{\ast}_{m_0,1}$ and define
\begin{equation} \label{Uast}
  {\bf U}^{\ast 1} = \frac{({\bf u}^1, {\bf u}^{\ast 1}) {\bf u}^{\ast 1} + ({\bf u}^1, {\bf u}^{\ast 2}) {\bf u}^{\ast 2}}{({\bf u}^1, {\bf u}^{\ast 1})^2 + ({\bf u}^1, {\bf u}^{\ast 2})^2}, \quad
  {\bf U}^{\ast 2} = \frac{-({\bf u}^1, {\bf u}^{\ast 2}) {\bf u}^{\ast 1} + ({\bf u}^1, {\bf u}^{\ast 1}) {\bf u}^{\ast 2}}{({\bf u}^1, {\bf u}^{\ast 1})^2 + ({\bf u}^1, {\bf u}^{\ast 2})^2}.
\end{equation}
Noting that for any two functions of the form $f_i(x, y) = e^{i a m \pi x} g_i(y)$ for $i=1,2$ where $m$ is a nonzero integer, we have
\[
   (\Re f_1, \Re f_2) = (\Im f_1, \Im f_2), \quad (\Re f_1, \Im f_2) = - (\Im f_1, \Re f_2).
\]
Thus
\[
({\bf u}^1, {\bf u}^{\ast 1}) = ({\bf u}^2, {\bf u}^{\ast 2}), \quad
({\bf u}^1, {\bf u}^{\ast 2}) = -({\bf u}^2, {\bf u}^{\ast 1}).
\]
which implies that
\begin{equation} \label{(u, Uast)}
  \begin{aligned}
    & ({\bf u}^2, {\bf U}^{\ast 1}) = ({\bf u}^1, {\bf U}^{\ast 2}) = 0, \\
    & ({\bf u}^1, {\bf U}^{\ast 1}) = ({\bf u}^2, {\bf U}^{\ast 2}) = 1.
  \end{aligned}
\end{equation}

\medskip

\noindent
\textit{Step 4. Derivation of the reduced equations.}
Now we write
\begin{equation} \label{u_CM}
  {\bf u}(x, y, t) = x_1(t) {\bf u}^1(x, y) + x_2(t) {\bf u}^2(x, y) + \Phi(x, y, t)
\end{equation}
where $\Phi$ is the center manifold function, $x_1,x_2\in \mathbb{R}$. Note that
\[
  L ({\bf u}^1 + i {\bf u}^2 ) = \beta ({\bf u}^1 + i {\bf u}^2)
\]
implies
\[
  L {\bf u}^1 = \Re(\beta) {\bf u}^1 - \Im(\beta) {\bf u}^2, \qquad
  L {\bf u}^2 = \Re(\beta) {\bf u}^2 + \Im(\beta) {\bf u}^1.
\]
Also by definition of center manifold, we have $(\Phi, {\bf u}^{\ast 1}) = 0$ and $(\Phi, {\bf u}^{\ast 2}) = 0$ which by \eqref{Uast} implies that
\begin{equation} \label{orthogonality of cm to critical eig}
  (\Phi, {\bf U}^{\ast i}) = 0, \qquad i = 1, 2.
\end{equation}
Hence by \eqref{(u, Uast)} and \eqref{orthogonality of cm to critical eig}
\begin{equation} \label{(Lu,Uast)}
  \begin{aligned}
    & (L {\bf u}, {\bf U}^{\ast 1}) = (x_1 L {\bf u}^1, {\bf U}^{\ast 1}) + (x_2 L {\bf u}^2, {\bf U}^{\ast 1}) = \Re(\beta) x_1 + \Im(\beta) x_2, \\
    & (L {\bf u}, {\bf U}^{\ast 2}) = (x_1 L {\bf u}^1, {\bf U}^{\ast 2}) + (x_2 L {\bf u}^2, {\bf U}^{\ast 2}) = \Im(\beta) x_1 + \Re(\beta) x_2.
  \end{aligned}
\end{equation}
Plugging \eqref{u_CM} into \eqref{main functional equation},  taking inner product with ${\bf U}^{\ast i}$ ($i=1,2$) and using \eqref{(u, Uast)} and \eqref{(Lu,Uast)}, we can write the reduced equations as
\begin{equation}
\begin{aligned} \label{real-reduced}
& \frac{d x_1}{dt} = \Re(\beta) x_1 + \Im(\beta) x_2 + (G( {\bf u} ,{\bf u} ) , {\bf U}^{\ast 1}) ,  \\
& \frac{d x_2}{dt} = -\Im(\beta) x_1+ \Re(\beta) x_2 + (G( {\bf u} ,{\bf u} ) , {\bf U}^{\ast 2}).
\end{aligned}
\end{equation}
Noting
\[
G_s({\bf u}^i, {\bf u}^j, {\bf U}^{\ast k}) = 0, \quad i,j,k =1,2,
\]
and
\[
  G_s(\Phi, \Phi, {\bf U}^{\ast k}) = o(3).
\]
we can expand the nonlinear terms of $x_1$ and $x_2$ in \eqref{real-reduced}
\begin{equation} \label{comp1}
( G( {\bf u} ,{\bf u} ) , {\bf U}^{\ast j}) =
x_1 G_s({\bf u}^1,\Phi, {\bf U}^{\ast j}) + x_2 G_s({\bf u}^2,\Phi,  {\bf U}^{\ast j}) + o(3).
\end{equation}
By \eqref{complex-center-manifold2},
\begin{equation} \label{comp2}
G_s({\bf u}^i,\Phi, {\bf U}^{\ast j}) = \sum_k  \Phi_{1,0k} (x_1^2 + x_2^2) c_{0k}^{ij} + \Phi_{1,2m,k}(x_1^2 -2i x_1 x_2 - x_2^2) c_{2m,k}^{ij},
\end{equation}
where we define
\begin{equation} \label{comp3}
\begin{aligned}
& c_{0,k}^{ij} = G_s({\bf u}^i,{\bf u}_{0k}, {\bf U}^{\ast j}), \\
& c_{2m,k}^{ij} = G_s({\bf u}^i, {\bf u}_{2m,k}, {\bf U}^{\ast j}).
\end{aligned}
\end{equation}
As in \eqref{NA5}, we can show that
\begin{equation} \label{comp4}
\begin{aligned}
& c_{0k}^{11} = c_{0k}^{22}, && c_{0k}^{12} = -c_{0k}^{21}, \\
& c_{2m,k}^{11} = - c_{2m,k}^{22}, && c_{2m,k}^{12} = c_{2m,k}^{21}.
\end{aligned}
\end{equation}
Moreover the calculation
\begin{equation} \label{comp5}
\begin{aligned}
& G_s({\bf u}^1, {\bf u}_{2m}^1, {\bf U}^{\ast 1}) = G_s({\bf u}^1, {\bf u}_{2m}^2, {\bf U}^{\ast 2}), \\
& G_s({\bf u}^1, {\bf u}_{2m}^2, {\bf U}^{\ast 1}) = -G_s({\bf u}^1, {\bf u}_{2m}^1, {\bf U}^{\ast 2})
\end{aligned}
\end{equation}
implies that
\begin{equation} \label{c2mk11}
c_{2m,k}^{11} = -i c_{2m,k}^{12}
\end{equation}
Using \eqref{comp1}-\eqref{c2mk11}, we have
\begin{equation*}
( G( {\bf u} ,{\bf u} ) , {\bf U} ^{\ast j}) =
b_{30}^j x_1^{3} + b_{21}^j x_1^2 x_2 + b_{12}^j x_1 x_2^2 + b_{03}^j x_2^{3} + o(3), \quad j =1,2,
\end{equation*}
where simple calculations show that
\begin{equation} \label{b130 b230}
\begin{aligned}
& b_{30}^1 = b_{12}^1 = b_{21}^2 = b_{03}^2 =
\sum_k \Phi_{1,0k} c_{0k}^{11} + \Phi_{1,2m,k} c_{2m,k}^{11}, \\
& b_{30}^2 = b_{12}^2 = - b_{21}^1 = - b_{03}^1 =
\sum_k \Phi_{1,0k} c_{0k}^{12} + i \Phi_{1,2m,k} c_{2m,k}^{11}.
\end{aligned}
\end{equation}
Thus the reduced equations \eqref{real-reduced} become
\begin{equation}\label{real-reduced-2}
\begin{aligned}
&\frac{d x_1}{dt} =\Re(\beta) x_1+\Im(\beta)
x_2 + b_{30}^1 x_1(x_1^2 + x_2^2) - b_{30}^2 x_2(x_1^2 + x_2^2) +o(3) ,\\
&\frac{dx_2}{dt} =-\Im(\beta) x_1+\Re(\beta)
x_2 + b_{30}^2 x_1(x_1^2 + x_2^2) + b_{30}^1 x_2(x_1^2 + x_2^2) +o(3).
\end{aligned}
\end{equation}

\medskip

\noindent
\textit{Step 5. Computation of the transition number $\gamma$.}
Letting $z = x_1+i x_2$, \eqref{real-reduced-2} becomes
\begin{equation}\label{complexreduced}
\frac{dz}{dt} = \overline{\beta} z + b z|z|^2 +o(|z|^3),
\end{equation}
where by \eqref{c2mk11} and \eqref{b130 b230}
\begin{equation} \label{b}
b = b_{30}^1+i b_{30}^2 = \sum_k \Phi_{1,0k} (c_{0k}^{11} + i c_{0k}^{12})
\end{equation}

If $\Re b < 0$ then \eqref{complexreduced} has a stable limit cycle
\[
   z = r e^{i \omega t}
\]
for $\Re \beta > 0$, i.e. for $R > R_0$ with
\[
  r = \sqrt{-\frac{\Re \beta}{\Re b}}, \quad \omega = -\Im \beta + \Im b r^2 \approx -\Im \beta.
\]

If $\Re b > 0$ then \eqref{complexreduced} has an unstable limit cycle for $R < R_0$.


Thus the transition is determined by the sign of real part of $b$ in \eqref{b} at $R=R_0$ defined as
\begin{equation} \label{gamma pre}
  \gamma = \Re(b) = \sum_{k=1}^{\infty} \Phi_{1,0k} c_{0k}^{11}
\end{equation}
since $\Phi_{1,0k}$, $c_{0k}^{11}$, $c_{0k}^{12}$ are real numbers.

\textit{Step 6. Derivation of the transition number $\gamma$ in \eqref{gamma}.}

Using \eqref{NA6} in \eqref{NA7}, we get
\begin{equation} \label{Phi10k}
  \Phi_{1,0k} = \frac{g_{0k}^{11}}{-\beta_{0k}({\bf u}_{0k},{\bf u}_{0k}^{\ast})}
\end{equation}
The mode ${\bf u}_{0k} = {\bf u}^{\ast}_{0k} = (\frac{\partial}{\partial y}\psi_{0k}, 0) = (DY_{0k}, 0)$ corresponds to the eigenfunction with $m = 0$ in \eqref{1d linear equation for Y}, i.e. solutions of $E D^4 Y = \beta D^2 Y$ with the boundary conditions $Y(\pm 1) = D^2 Y(\pm 1) = 0$. These solutions are easily obtainable.
\begin{equation} \label{beta0k-Y0k}
  \begin{aligned}
      & \beta_{0k} = - \frac{k^2 \pi^2}{4} E, \\
      & Y_{0k} =
      \begin{cases}
        \cos \frac{k \pi y}{2}, & \text{if k is odd}, \\
        \sin \frac{k \pi y}{2}, & \text{if k is even}. \\
      \end{cases}
  \end{aligned}
\end{equation}
Thus
\[
  ({\bf u}_{0k},{\bf u}_{0k}^{\ast})=\frac{2}{a}\int_{-1}^{1}\vert DY_{0k}\vert^{2} = \frac{-2\beta_{0k}}{aE}.
\]
To obtain $\gamma$, we need to compute
\begin{equation} \label{c0k11}
  c_{0,k}^{11} = G_s({\bf u}^1,{\bf u}_{0k}, {\bf U}^{\ast 1}) =
  G({\bf u}^1,{\bf u}_{0k}, {\bf U}^{\ast 1}) + G({\bf u}_{0k}, {\bf u}^1, {\bf U}^{\ast 1})
\end{equation}
and
\begin{equation} \label{g0k11}
  g_{0k}^{11} = G({\bf u}^1,{\bf u}^1,  {\bf u}^{\ast}_{0k})
\end{equation}
where ${\bf u}^1$ is the real part of the critical eigenfunction (with zonal wave number $m_0$) as given in \eqref{critical mode}, $U^{\ast 1}$ is given by \eqref{Uast} and $G$ is the trilinear operator \eqref{G}. So we plug in $\psi_{m_0 1} = e^{i a m_0 \pi x} Y_{m_01}(y)$ and $\psi^{\ast}_{m_0 1} = e^{i a m_0 \pi x} Y_{m_01}^{\ast}(y)$ into \eqref{c0k11} and \eqref{g0k11}. After tedious computations, we can obtain
\begin{equation} \label{c0k11-g0k11 final}
  \begin{aligned}
    & c_{0k}^{11}  = - \frac{a\epsilon m_0 \pi}{|J_1|^{2}} \Im\{J_1 J_{2}(k)\}, \\
    & g_{0k}^{11} = -\frac{\epsilon m_0 \pi \beta_{0k}}{E} \Im J_{3}(k),
  \end{aligned}
\end{equation}
where
\begin{equation} \label{3integrals-ver1}
\begin{aligned}
& J_1 = \int \limits_{-1}^{1} \overline{Y^{\ast}} ((a m_0 \pi)^{2} - D^2)Y dy, \\
& J_{2}(k) =  \int \limits_{-1}^{1} D Y_{0k} \overline{Y^{\ast}} \left( (a m_0 \pi)^2 - \frac{k^2 \pi^2}{4} - D^2 \right)Y dy, \\
& J_{3}(k) = \int \limits_{-1}^{1} Y_{0k} Y D\overline{Y} dy.
\end{aligned}
\end{equation}
Finally we note that $Y = Y_{m_01}$ is either an odd or an even function of $y$ which follows from the fact that the equation \eqref{1d linear equation for Y} is invariant under the change of variable $y \rightarrow -y$. Thus $Y_{0k}$ must be odd (by \eqref{beta0k-Y0k}, $k$ must be even) otherwise $J_3(k)$ is zero. So the nonzero contributions to $\gamma$ in \eqref{gamma pre} comes from even $k$. Thus we obtain \eqref{gamma} from \eqref{gamma pre}, \eqref{Phi10k}, \eqref{beta0k-Y0k}, \eqref{c0k11-g0k11 final} and \eqref{3integrals-ver1}, which concludes the proof.
\end{proof}

\section{Numerical Evaluation}

\subsection{Legendre-Galerkin method for \eqref{1d linear equation for Y} and \eqref{1d adjoint linear equation for Y}}

In this section we present a method to approximate the solutions of the eigenvalue problems \eqref{1d linear equation for Y} and \eqref{1d adjoint linear equation for Y}. There are  fourth-order problems so a Legendre-Galerkin method \cite{Shen94,STW11} will be efficient and accurate.

We look for an approximation of $Y$ in the space
$X_N=\{v\in P_N: v(\pm 1 ) = v''(\pm 1) = 0\}$, where $P_N$ is the space of polynomials with degree less than or equal to $N$.

Using the approach in \cite{Shen94}, we set
\[
  f_j(y) = L_j(y)+\sum_{k = 1}^{4} c_{j, k} L_{j+k}(y),
\]
with $c_{j, k}$ to be chosen such that
\[
  f_j(\pm 1 ) = f''_j(\pm 1) = 0.
\]
It is easy to determine from the properties of Legendre polynomials that
\[
\begin{aligned}
  & c_{j, 1} = c_{j, 3} = 0, \\
  & c_{j, 2} = \frac{2(2j+5)(j^2+5j+9)}{(j+3)(j+4)(2j+7)}, \\
  & c_{j, 4} = -1 - c_{j, 2},
\end{aligned}
\]
and we have $X_N=\text{span}\{f_j: j=0,1,\cdots,N-4\}$.

Writing  $Y^N(y) = \sum_{j = 0}^{N-4} y_j f_j(y)\in X_N$, and plugging it into \eqref{1d linear equation for Y}, and taking inner product with $f_k(y)$, ($k = 0, 1, \dots, N-4$) we obtain the Legendre-Galerkin approximation of \eqref{1d linear equation for Y} in the following matrix form:
\begin{equation} \label{disc_eig_prob}
\begin{split}
  & \left( E (A_1 - 2\alpha_m^2 A_2+ \alpha_m^4 A_3) + i \alpha_m ( \frac{R}{\pi^3} A_4^T + \frac{R}{\pi^3}(\pi^2-\alpha_m^2)A_5 - A_3 ) \right)\mathcal{Y}^N \\
  & = \beta (A_2 - \alpha_m^2 A_3) \mathcal{Y}^N,
\end{split}
\end{equation}
where
\begin{equation*}
\begin{aligned}
& a_{1, jk} = (D^4f_j,f_k), \quad a_{2, jk} = (D^2f_j,f_k), \quad a_{3, jk} = (f_j,f_k), \\
& a_{4, jk} = (\cos \pi y \, D^2f_j,f_k), \quad a_{5, jk} = (\cos \pi y \, f_j,f_k) \\
& A_i = (a_{i, jk})_{j, k = 0, \dots N-4} \quad i = 1, \dots 5\\
& \mathcal{Y}^N =
\begin{bmatrix}
  y_0 && y_1 && \cdots && y_{N-4}
\end{bmatrix}^T.
\end{aligned}
\end{equation*}
By using the properties of Legendre polynomials, we find that $A_1$, $A_2$, $A_3$ are real symmetric banded matrices given by
\[
  a_{1,jk} =
  \begin{cases}
    \dfrac{(2 + 2j)(2 + j)(3 + 2j)^2(5 + 2j)}{(3 + j)(4 + j)}, &\text{ if }j = k\\
    0, &\text{ otherwise }
  \end{cases}
\]
\[
  a_{2,jk} = \begin{cases}
    \dfrac{(2 + 2j)(2 + j)(3 + 2j)}{(3 + j)(4 + j)}, &\text{ if } j = k \pm 2 \\
    \dfrac{(4(3 + 2j)(5 + 2j)(102 + 110j + 47j^2 + 10j^3
    +j^4)}{(3 + j)^2(4 + j)^2(7 + 2j)}, &\text{ if } j = k\\
    0, &\text{ otherwise }
  \end{cases}
\]
\[
  a_{3,jk} = \begin{cases}
    \dfrac{2(1 + j)(2 + j)(3 + 2j)}{(3 + j)(4 + j)(7 + 2j)(9 + 2j)}, &\text{ if } j = k \pm 4 \\
    \dfrac{-8(222 + 196j + 77j^2 + 14j^3 + j^4)}{(3 + j)(4 + j)(5 +j)(6 +j)(11 + 2j)}, &\text{ if } j = k \pm 2 \\
    \dfrac{4(5580 + 11202j + 9263j^2 + 4170j^3 + 1105j^4
    +168j^5 + 12j^6)}{(3 + j)^2(4 + j)^2(1 + 2j)(7 + 2j)(9 + 2j)}, &\text{ if } j = k\\
    0, &\text{ otherwise }
  \end{cases}
\]
To approximate $A_4$ and $A_5$ we fix some integer $M$ and compute the Legendre-Gauss-Lobatto quadrature points $y_n$  and weights $\omega_n$ for $n = 0, \dots, M$. Then we compute the matrices $D^2 f_j(y_n)$ and $f_j(y_n)$.
\begin{align*}
  & a_{4, jk} = \sum_{n=0}^M \cos(\pi y) D^2 f_j(y_n) f_k(y_n) \omega_n, \\
  & a_{5, jk} = \sum_{n=0}^M \cos(\pi y) f_j(y_n) f_k(y_n) \omega_n.
\end{align*}
$M$ is chosen large enough to provide sufficient accuracy in the computation of $A_4$ and $A_5$. We note that $A_4$ and $A_5$ are real, full matrices. $A_5$ is symmetric while $A_4$ is non-symmetric.

The eigenvalue problem \eqref{disc_eig_prob} can be easily solved by using a standard eigenvalue solver.
For each $m \in \mathbb{Z}$, we can numerically find $N-3$ eigenvalues $\beta^N_{m, j}$ (with $\Re \beta^N_{m, j} \ge \Re \beta^N_{m, k}$ if $j < k$) of \eqref{disc_eig_prob}, and corresponding eigenvectors $\mathcal{Y}^N_{m, j}$, $j = 1, \dots, N-3$.

By taking the complex conjugate of \eqref{disc_eig_prob}, we find $\mathcal{Y}^N_{m, j} = \overline{\mathcal{Y}}^N_{-m, j}$, $\beta^N_{m, j} = \overline{\beta}^N_{-m, j}$ and in particular $\beta^N_{0j}$ and $\psi^N_{0j}$ are real.

It is known that (\cite{Wei.T88}), the computed eigenpairs $(\beta^N_{m, j}, e^{i\alpha_m x}Y^N_{m, j})$ of \eqref{disc_eig_prob} converge to eigenpairs $(\beta_{m,j},\, \psi_{m,j})$ of \eqref{lin_eig_prob} exponentially as $N\rightarrow \infty$  for $0\le j\lesssim 2N/\pi$ for each fixed $m$.

Finally, the analog of approximating equation \eqref{disc_eig_prob} for the adjoint problem \eqref{1d adjoint linear equation for Y} is the adjoint problem of \eqref{disc_eig_prob}. Namely
\begin{equation} \label{adjoint disc_eig_prob}
\begin{split}
  & \left( E (A_1 - 2\alpha_m^2 A_2+ \alpha_m^4 A_3) - i \alpha_m ( \frac{R}{\pi^3} A_4 + \frac{R}{\pi^3}(\pi^2-\alpha_m^2)A_5 - A_3 ) \right)\mathcal{Y}^{\ast N} \\
  & = \beta^{\ast} (A_2 - \alpha_m^2 A_3) \mathcal{Y}^{\ast N}.
\end{split}
\end{equation}

\subsection{Numerical computation of the transition number $\gamma$} \label{Numerical approximation of the transition number}

To approximate $\gamma$ in \eqref{gamma}, we follow the following steps:
\begin{enumerate}[label=\textbf{Step \arabic*}]
  \item The parameters of the system are $a$, $\epsilon$, $E$ and $R$. We fix the Ekman number $E$ and the length scale $a$. Then by \eqref{RepsilonE}, the Rossby number $\epsilon = R / E $ is also fixed.
  Moreover, to study the transition, $R$ has to be fixed to $R = R_0$ where $R_0$ is determined by Assumption~\ref{PES assumption}.

  \item

  \begin{figure}[ht]
  \includegraphics[scale=1]{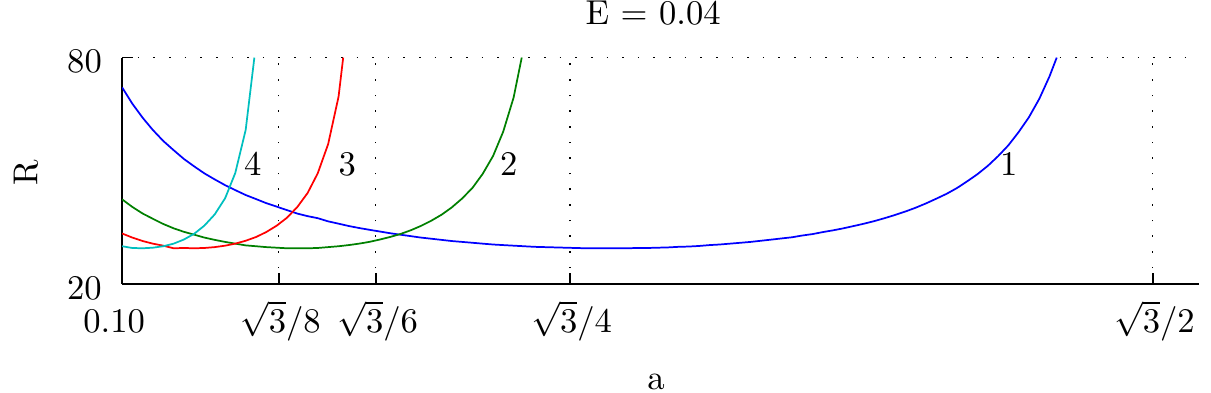}
  \includegraphics[scale=1]{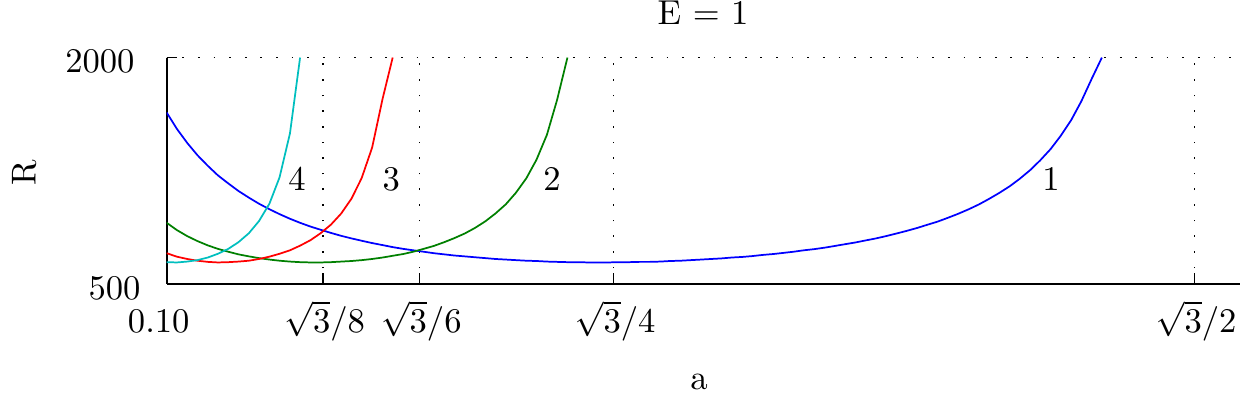}
  \caption{Neutral stability curves which are defined by $\Re\beta_{m, 1} = 0$ in the a-R plane for zonal wave numbers $m=1,2,3,4$ and for $E=0.04$, $E=1$.}
  \label{fig:neut_stab}
  \end{figure}

  We determine the critical zonal wave number $m_0$ and the critical Reynolds number $R_0$ in Assumption~\ref{PES assumption}.
  In \cite{CGSW03}, it was shown that $m_0 = 1$ if $\sqrt{3}/4 \le a \le \alpha_0$ for some $\alpha_0 <\sqrt{3}/2$.
  Figure~\ref{fig:neut_stab} show that the difficulties in proving the Assumption~\ref{PES assumption} for $\alpha_0 \le a < \sqrt{3}/2$ is purely technical.
  Extrapolating this result supported by our numerical computations of neutral stability curves shown in Figure~\ref{fig:neut_stab}, we claim that Assumption~\ref{PES assumption} can be satisfied by $m$ only if $a < \sqrt{3}/(2 m)$.

  For such $m$, we set $\Re \beta^N_{m, 1} = 0$ in \eqref{disc_eig_prob} and solve for the Reynolds number $R$  to find $R^N_{0, m}$. $R^N_0$ which approximates $R_0$, is the minimum of such $R^N_{0, m}$ and the minimizing $m$ value is $m_0$.

  Figure~\ref{fig:neut_stab} shows that Assumption~\ref{PES assumption} holds for a simple complex pair of eigenvalues for almost all $a < \sqrt{3}/2$ except for discrete values of $a$ where neutral stability curves corresponding to different zonal wave numbers intersect; Assumption 1 is generic.

  \item With $\beta^N_{m_0, 1}$ computed, the eigensolutions $\mathcal{Y}^N_{m_0, 1} = [\tilde{y}_j]_{j=0}^{N-4}$ and $\mathcal{Y}^{\ast N}_{m_0, 1} = [\tilde{y}^{\ast}_j]_{j=0}^{N-4}$ of \eqref{disc_eig_prob} and \eqref{adjoint disc_eig_prob} are also found.

  \item The Legendre-Gauss-Lobatto quadrature points $y_j$  and weights $\omega_j$ are calculated for $j = 0, \dots, M$ where $M$ will be chosen large enough to allow sufficient accuracy in the computation of \eqref{disc 3 integrals}.
  \item The values of the eigenmodes and their derivatives at quadrature points $y_j$ are computed.
  \[
    \begin{aligned}
      & Y_{m_0, 1}^N(y_j) = \sum_{k=0}^{N-4} \tilde{y}_k f_k(y_j), \qquad Y_{m_0, 1}^{\ast N}(y_j) = \sum_{k=0}^{N-4} \tilde{y}^{\ast}_k f_k(y_j), \\
      & DY_{m_0, 1}^N(y_j) = \sum_{k=0}^{N-4} \tilde{y}_k f_k'(y_j), \qquad D^2Y_{m_0, 1}^N(y_j) = \sum_{k=0}^{N-4} \tilde{y}_k f_k''(y_j). \\
    \end{aligned}
  \]
  \item It is easy to see that if we multiply $Y_{m_0, 1}$ by a complex number $c$ then $\gamma$ is multiplied by $\vert c \vert^2$. To find a unique $\gamma$, we normalize $Y_{m_0, 1}$ so that $\max_{0\le j \le M} Y^N_{m_0, 1}(y_j) = 1$.
  \item We can also normalize $Y^{\ast N}_{m_0, 1}$ so that $I_1 = 1$ in \eqref{3integrals}. It is easy to see that this scaling has no effect on $\gamma$.
  \item Finally we approximate $I_2(k)$, $I_3(k)$ by
  \begin{equation} \label{disc 3 integrals}
    \begin{aligned}
      & I_{2}(k) \approx  \sum_{j=0}^M \cos(k \pi y_j) Y_{m_0,1}^{\ast N}(y_j) \left( (a m_0 \pi)^2 - k^2 \pi^2 - D^2 \right) \overline{Y^N_{m_0,1}(y_j)} \omega_j, \\
      & I_3(k) \approx \sum_{j=0}^M \sin(k \pi y_j) Y^N_{m_0,1}(y_j) \overline{D Y_{m_0,1}^N(y_j)} \omega_j.
    \end{aligned}
  \end{equation}
  Obviously increasing $M$ increases the accuracy of approximation in \eqref{disc 3 integrals}. In our experiments we found that $M = 2N$ gives enough accuracy.
\end{enumerate}

\subsection{Numerical results}  \label{Section: numerical results}

In this section, we present the results of our numerical computations of $\gamma$ for
different parameter choices. As discussed in Section~\ref{Numerical approximation of the
transition number}, the only parameters that need to be varied are the Ekman number $E$
and the length scale $a$.

For showing typical results, we consider a mid-latitude atmospheric jet in a zonal channel
at a reference latitude $\theta_0 = 45^{\circ}$N. The dimensional zonal velocity of the
background state $u_0$ in \eqref{steady state} has a maximum $U/(\pi^3 E)$.  With a typical
zonal velocity of $U = 15$ ms$^{-1}$, we limit our numerical investigations to $E$ values
between $0.01$ to $0.03$ which corresponds to maximum zonal jet velocities between
$16$~ms$^{-1}$ up to $48$~ms$^{-1}$.  For $a=0.2$, a typical length scale of $L = 3000$~km
yields a channel of length $2 L/a = 30000$~km in the meridional direction that corresponds to
about $360^{\circ}$ in longitude. We therefore consider $a$ values in the range $0.1 \le a \le
0.6$.

For this parameter regime we approximate the transition number $\gamma$ following the
procedure in Section~\ref{Numerical approximation of the transition number}. The results
we find are presented in Table~\ref{results-table} which suggest that $\gamma$ is always
positive. According to Theorem~\ref{main theorem}, this means the transition is Type-II at
the first critical Reynolds number $R_0$.

We can also compute the period $T = 2\pi/\Im\beta_{m_0, 1}$ of the solution
\eqref{u_per} where $T$ has been non-dimensionalized by $1/(\beta_0 L)$. For example,
with the above  choices of $L$ and $U$, for $a = 0.20$ and $E=0.01$ we find that
$m_0 = 2$, $R_0 = 5.64$. The planetary vorticity gradient $\beta_0$ at $45$N is $1.6\times 10^{-11}$. We compute the period to be about $T=196$ days. The
stream function of the time periodic  solution \eqref{u_per} bifurcated on $R<R_0$
which is unstable because of the Type-II  transition is shown in
Figure~\ref{fig:periodic_solution}.  The pattern indicates a typical one due to
barotropic instability  which, due to the background state zonal velocity, propagates
eastwards.

\begin{figure}[ht]
\begin{center}
$${}\hskip -.8in\includegraphics[scale=1]{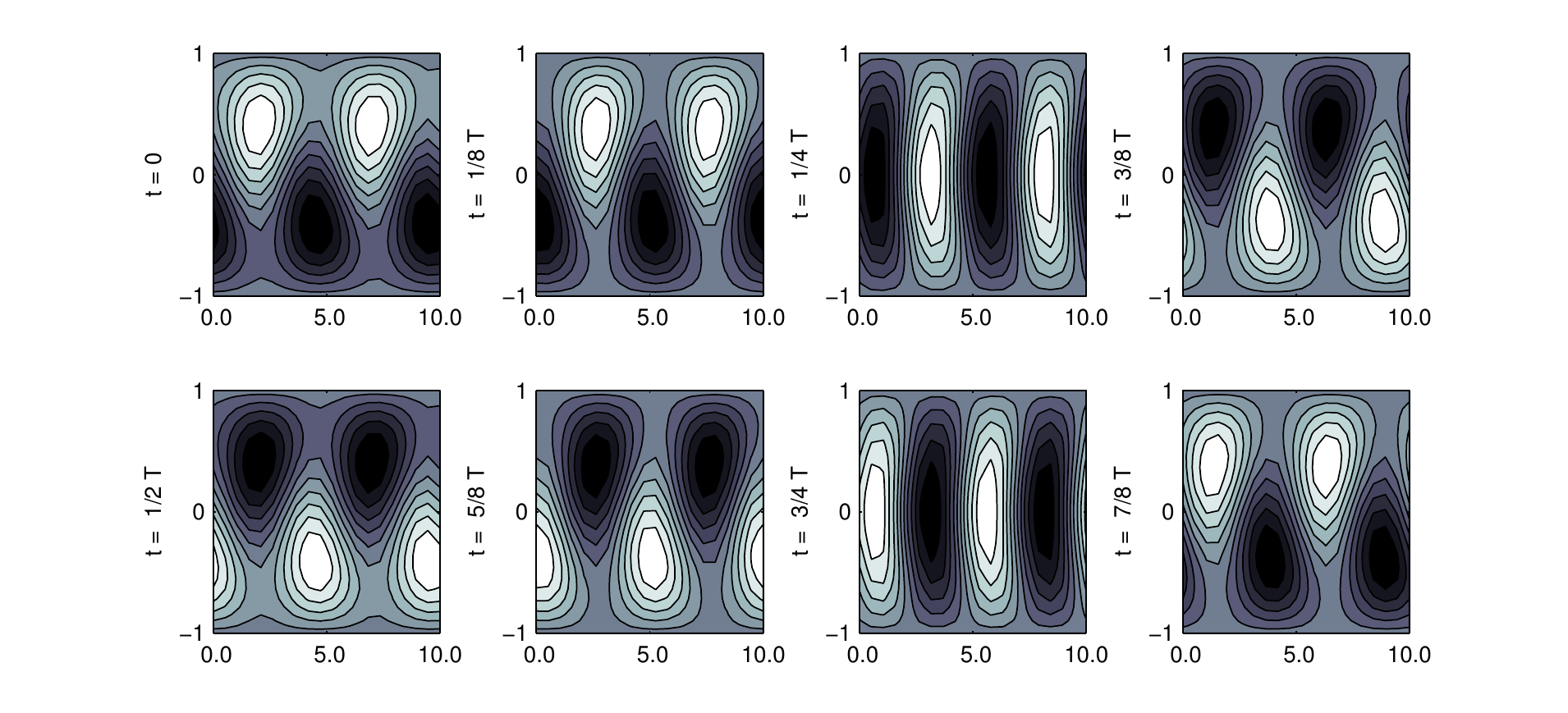}$$
\end{center}
\caption{The stream function of the bifurcated periodic solution on $R<R_0$ at
$a=0.2$ and $E=0.01$;  $T$ denotes the  period of the periodic orbit.}
\label{fig:periodic_solution}
\end{figure}

\begin{table}
\caption{The value of $\gamma$ for $0.1 \le a \le 0.6$ and $0.01 \le E \le 0.03$.
\label{results-table}}
\begin{tabular}{cc|cccccc|}
\cline{3-8}
 & & \multicolumn{6}{|c|}{a} \\
\cline{3-8}
&  & 0.100 & 0.200 & 0.300 & 0.400 & 0.500 & 0.600 \\
\hline
\multicolumn{1}{|c|}{\multirow{5}{*}{E}} &
0.010 & 0.197 & 0.197 & 0.188 & 0.197 & 0.194 & 0.175 \\
\multicolumn{1}{|c|}{}& 0.015 & 0.174 & 0.174 & 0.163 & 0.174 & 0.172 & 0.155 \\
\multicolumn{1}{|c|}{}& 0.020 & 0.156 & 0.156 & 0.143 & 0.156 & 0.156 & 0.143 \\
\multicolumn{1}{|c|}{}& 0.025 & 0.144 & 0.144 & 0.130 & 0.144 & 0.146 & 0.135 \\
\multicolumn{1}{|c|}{}& 0.030 & 0.135 & 0.135 & 0.121 & 0.135 & 0.139 & 0.130 \\
\hline
\end{tabular}

\end{table}

\section{Summary and Discussion}

In this paper, we have extended the results of  \cite{CGSW03} on the existence of a Hopf
bifurcation in the forced barotropic vorticity equation (\ref{pre-main}) by rigorously showing
the type  of finite amplitude solutions which can occur near this Hopf bifurcation. Central in the
analysis is the computable quantity $\gamma$ which characterizes the transition behavior
near the critical point.

The aim of this paper was only to focus on the theory and numerical computation. As an
illustration, we explore this number numerically in a parameter  regime relevant for an
atmospheric jet and find that a catastrophic transition is preferred.  For typical  ocean cases,
for example in western boundary currents such as the Gulf Stream  and the Kuroshio, and
the Antarctic Circumpolar Current, the  results will be reported  elsewhere.

While (\ref{pre-main}) is a cornerstone dynamical model of the ocean and atmospheric circulation,
it of course represents only a limited number of processes.  As practiced by the earlier workers in this
field, such as J. Charney  and J. von Neumann, and from the lessons learned by the failure of Richardson's pioneering work, one tries to be satisfied with simplified models approximating the actual motions to a greater or lesser degree instead of attempting to deal with the atmosphere/ocean in all its complexity. By starting with models incorporating only what are thought to be the most important of atmospheric influences, and by gradually bringing in others, one is able to proceed inductively and
thereby to avoid the pitfalls inevitably encountered when a great many poorly understood factors are introduced all at once.

However, the same results will be true if we work on the barotropic  equations in
primitive  variables; see (\ref{primitive-formulation}) and \cite{SD13}.  Second, we
would expect  that the same type of results obtained in this article will also
be true if we use higher order friction (e.g. hyper-friction,  \cite{selton}) in
(\ref{pre-main}) as this would only change the eigenfunction structure slightly.
Third, the method presented in this paper combined with the methods introduced in
\cite{CLW14a, CLW14b}  can be used to study the case with bottom topography
\cite{COV04}, and the case where  noise represents  ocean eddies. We will
explore these new directions in the near future.
\medskip

\noindent{Acknowledgment.} The work of H.D. was supported by the Netherlands Organization for  Scientific Research (NWO) through the  COMPLEXITY  project PreKurs; the work of T.S. was supported by the Scientific and Technological Research Council of Turkey (Grant number 114C142);  the work of J.S. was partially supported by NSF grants DMS-1217066 and DMS-1419053; and the work of S.W. was supported in part by NSF grant DMS-1211218.

\bibliographystyle{siam}

\end{document}